\def\ps@pprintTitle{%
 \let\@oddhead\@empty
 \let\@evenhead\@empty
 \def\@oddfoot{}%
 \let\@evenfoot\@oddfoot}
\DeclarePairedDelimiter{\ceil}{\lceil}{\rceil}
\newcommand{\remove}[1] {}
\newtheorem{definition}{Definition}
\newtheorem{proposition}{Proposition}
\newtheorem{lemma}{Lemma}
\newtheorem{corollary}{Corollary}
\newtheorem{theorem}{Theorem}
\newtheorem{remark}{Remark}
\newtheorem*{summary*}{Summary of results}
\newtheorem*{importantrmk*}{Important remark}
\newcommand{\reco}{{\sc Re\-color}\xspace}
\newcommand{\cola}{{\sc Color}\xspace}
\newcommand{\valc}{{\sc Color\-Val}\xspace}
\newcommand{\C}{\mathcal{C}} 
\newcommand{\seq}{\mathcal{S}}
\newcommand{\hk}{2\hat{k}}
\newlength{\pgmtab}
\begin{document}

\title{Acyclic Edge Coloring\\ through the Lov\'{a}sz Local Lemma\tnoteref{ack}
}
\tnotetext[ack]{Research co-financed by the European Union (European Social Fund ESF) and Greek national funds through the Operational Program ``Education and Lifelong Learning" of the National Strategic Reference Framework (NSRF) Research Funding Program: ARISTEIA II. 
}

\author[uoa,cti]{{Ioannis Giotis}}
\ead{igiotis@cs.upc.edu}
\author[uoa,cti]{Lefteris Kirousis\corref{correspondingauthor}}
\cortext[correspondingauthor]{Corresponding author}
\ead{lkirousis@math.uoa.gr} 
\author[uoa]{{Kostas I. Psaromiligkos}}
\ead{kostaspsa@gmail.com}
\author[uoa,cnrs,cti]{{Dimitrios~M.~Thilikos}}
\ead{sedthilk@thilikos.info}

\address[uoa]{Department of Mathematics,  National \& Kapodistrian University of Athens, Greece}
\address[cnrs]{AlGCo project-team, CNRS, LIRMM, Montpellier, France}
\address[cti]{Computer Technology Institute \& Press ``Diophantus", Patras, Greece}

\begin{abstract}
We give a  probabilistic analysis of a Moser-type algorithm for the   Lov\'{a}sz Local Lemma (LLL),   adjusted to search  for acyclic edge colorings of a graph. We thus improve the best known upper bound to acyclic chromatic index, also obtained by analyzing a similar algorithm, but  through the entropic method (basically counting argument). Specifically we show that a graph with maximum degree $\Delta$ has an acyclic proper edge coloring with at most $\lceil 3.74(\Delta-1)\rceil+1 $ colors, whereas, previously,  the best  bound was $4(\Delta-1)$. The main contribution of this work is that it comprises a  probabilistic analysis of a Moser-type algorithm  applied to events pertaining to dependent variables.  
\end{abstract}

\begin{keyword}
Acyclic edge coloring\sep  Algorithmic proof of the Lov\'{a}sz Local Lemma
\MSC[2010] 05D40 \sep  05C15
\end{keyword}

\maketitle

\section{Introduction and the basic  algorithm}\label{sec:colors}

Let $G = (V,E)$ be a (simple) graph with $l$ vertices and $m$ edges. The  chromatic index of $G$ is the least number of colors needed to {\em properly} color its  edges, i.e., to color them so that no adjacent edges get the same color.  If $\Delta$ is the maximum degree of $G$, it is known that its chromatic index is either $\Delta$ or $\Delta+1$ (Vizing \cite{vizing1965critical}).  

 A cycle of $G$ of length $s$ is a sequence $v_i, i =0, \ldots, s-1$  of distinct vertices so that $\forall i = 0, \ldots s-1$, $v_{i}$ and $v_{i+1 \pmod s}$ are connected by an edge.  The {\em acyclic} chromatic index of $G$ (ACI)  is defined as the least number of colors needed to properly color the edges of $G$ so that no   cycle is {\em bichromatic}, i.e., so that there is no cycle  whose  edges are properly colored with only two colors.  Notice that in any properly colored graph, any cycle of odd  length is necessarily at least  {\em trichromatic}, i.e., its edges have three or more colors. It has been conjectured (J. Fiam\v{c}ik \cite{fiam} and Alon et al. \cite{alon2001acyclic}) that the acyclic chromatic index of any graph with maximum degree $\Delta$ is at most  $\Delta +2$.  A  number of successively tighter upper bounds to the acyclic chromatic index have been provided in the literature. Most recently,  Esperet and Parreau \cite{DBLP:journals/ejc/EsperetP13} proved that the acyclic chromatic index is at most  $4(\Delta -1)$.  Their proof makes use of  the technique  of
Moser and Tardos \cite{moser2010constructive}  that constructively proves the L\'{o}vasz Local Lemma (a technique which became known as the ``entropy compression method" \cite{TT}). 
An approach using the entropy compression method was also used for the vertex analogue of 
 the edge chromatic number by Gon\c{c}alves et al. \cite{1406.4380}.
 
In this work, 
we modify the technique used by Esperet and Parreau \cite{DBLP:journals/ejc/EsperetP13}  in that for a Moser-type  edge coloring algorithm, we use as tool of analysis  
the approach we described in Section 2 of  \cite{giotisanalco}. Namely, instead of an essentially counting argument as used in the entropy compression method,  we give a  probabilistic analysis that yields an upper bound of $\lceil 3.74(\Delta-1) \rceil+1 $ for the acyclic chromatic index, improving over $4(\Delta-1)$ in \cite{DBLP:journals/ejc/EsperetP13} (in contrast to the paper by Moser and Tardos \cite{moser2010constructive}, a  probabilistic analysis was used in the original paper of Moser \cite{moser2009constructive}; see the elegant exposition   by Spencer in \cite{Sp}). The present paper can be read independently of \cite{giotisanalco}.
 
An interesting aspect of this application is that the edge colors  to which the  ``undesirable" events of LLL refer to are not probabilistically independent. This dependence introduces certain conceptual  difficulties, some of which are, we believe, non-trivial (see the proof of Lemma \ref{lem:ourapproach} and the preceding remarks).  The randomized algorithm that we deal with  allows   the  freedom, when coloring an edge,  to  make a selection, uniformly at random,    from a guaranteed minimum  number of available colors. However,  the ``guarantee" of a minimum number of available colors is valid only  if any conditioning refers only  to colors previously assigned.
To handle the probabilistic analysis of such  a randomized algorithm without introducing posterior probabilities, which would render the analysis unmanageable,  we put all  events referring to colors that edges have in chronological order according to the instant these edges got their current color.   We consider this approach of handling dependent events in  constructive proofs of LLL, rather than just  the improvement of the coefficient of the upper bound from 4 to 3.74,  as the  essential aspect of the contribution of this work (nevertheless, see the discussion  in Section \ref{sec: discussion} for possible further numerical improvement).

We also get improved numerical results with respect to graphs with bounded girth, some specific values of which are sampled  in Figure~\ref{fig:results}. 
\begin{figure}[h]
\centering
\begin{tabular}{|c|c|c|}\hline
Girth & Number of colors & Previously known~\cite{DBLP:journals/ejc/EsperetP13} \\
\hline - & $3.731(\Delta-1)+1$ & $4(\Delta-1)$\\
\hline 7 & $3.326(\Delta-1)+1$ & $3.737(\Delta-1)$\\
\hline 53 & $2.494(\Delta-1)+1$ & $3.135(\Delta-1)$\\
\hline  219 & $2.323(\Delta-1)+1$ & $3.043(\Delta-1)$\\ \hline
\end{tabular}
\caption{Our results}
\label{fig:results}
\end{figure}
\remove{For graphs with girth at least 7, our technique yields the bound $\lceil X.XX(\Delta-1) \rceil +1$ compared to the bound $\lceil 3.74(\Delta-1)\rceil$ in \cite{DBLP:journals/ejc/EsperetP13}, and for graphs with girth at least 53, we get the bound $\lceil X.XX(\Delta-1) \rceil +1$, whereas the corresponding one  in \cite{DBLP:journals/ejc/EsperetP13} is $\lceil 3.14(\Delta-1)\rceil $. }
%

Below, to facilitate notation, we call a proper edge-coloring {\em $s$-acyclic} if it contains no bichromatic cycle  of length $s$ or less. We  call the corresponding graph parameter the {\em s-acyclic chromatic index}.

We start by mentioning the  following fact, proved in Esperet and Parreau~\cite{DBLP:journals/ejc/EsperetP13}:
\begin{lemma}[Esperet and Parreau~\cite{DBLP:journals/ejc/EsperetP13}]\label{sufficientcolors} At any step of any successive coloring of the edges of a graph, there are at most 
$2 (\Delta-1)$ colors that should be avoided in order to produce a 4-acyclic coloring. 
\end{lemma}
\begin{proof}[Proof Sketch]
Notice that for each edge $e$, one has to avoid the colors of all edges  adjacent to $e$, and moreover  for each pair of homochromatic (of the same color) edges $e_1, e_2$  adjacent to $e$ at different endpoints (which contribute one to the count of colors to be avoided), one has also to avoid the color of the at most one edge $e_3$ that together with $e,e_1, e_2$ define a cycle of length 4. So the total count of colors to be avoided adds up to $2 (\Delta-1)$.\end{proof}
Assume now that we have $K=\ceil{(2 + \gamma)(\Delta-1)} +1 $ colors, where $\gamma$ is a nonnegative constant to be computed. 

We assume below that  the edges of the graph,  and its cycles, are ordered according to  fixed a priori orderings.
 
Notice that for each cycle of the graph there are two  consecutive traversals of its edges. It does not matter which we use, but for concreteness when we start the traversal from an edge $e$, the next edge  to be traversed is the least one from the two adjacent to $e$.  We call this traversal ``positive". 

Also in an even length cycle, the edges can be partitioned into two subsets of equal cardinality, the  elements of each of which have pairwise odd distance. These sets are called equal parity sets. If we color such a cycle with two colors so that no adjacent edges get the same color, then the equal parity sets are the {\em monochromatic} sets, i.e. the sets whose respective elements get the same color.

Consider now the algorithm  given in Figure \ref{fig:col}.  

\begin{figure}[h]
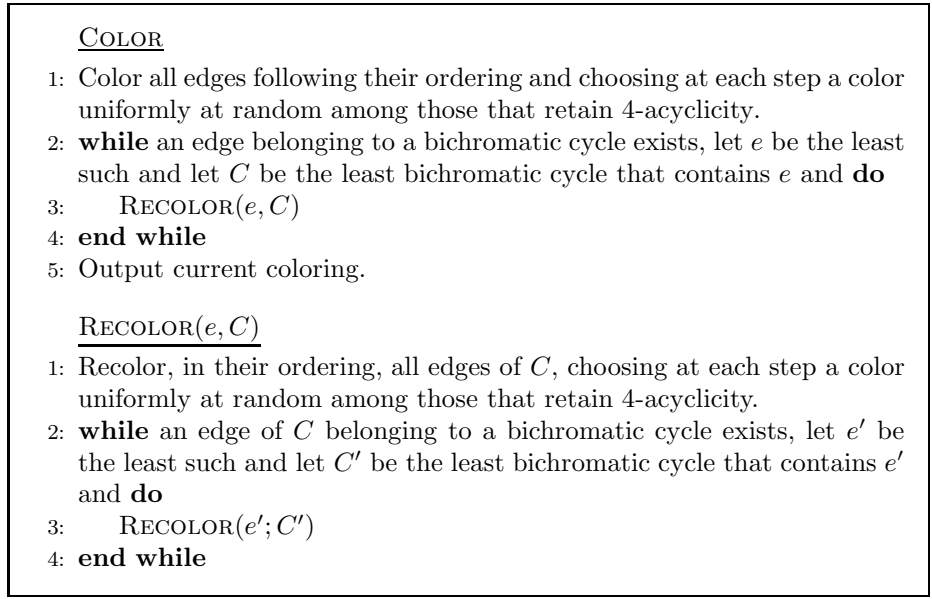

\begin{framed}
\begin{algorithmic}[1]
\Statex{\underline{{\cola}}}\smallskip
\State Color all edges following their  ordering and choosing at each step a color uniformly at random among those that retain   \mbox{4-acyclicity}.\label{line:color_initial}
\While{an edge  belonging to a  bichromatic cycle exists,  let $e$ be the least such  and let  $C$ be the least bichromatic cycle that contains $e$ and} \label{line:color_check}  
 \State \Call{Recolor}{$e, C$} \label{line:recolor_root}
\EndWhile
\State Output current coloring.
\end{algorithmic}
\medskip
\begin{algorithmic}[1]
\Statex{\underline{{\reco}($e,C$)}}\smallskip
\State 
 Recolor, in their ordering, all edges of $C$,   choosing at each step a color uniformly at random  among  those that retain 4-acyclicity.
\While {an edge of $C$  belonging to a  bichromatic cycle exists,  let $e'$ be the least such  and let  $C'$ be the least bichromatic  cycle that contains $e'$  and}    \State \Call{Recolor}{$e';C'$} \label{line:recolor_internal}\EndWhile
\end{algorithmic}
\end{framed}
\caption{ {\textit{The coloring  algorithm}}}\label{fig:col}
\end{figure}
A {\em root}  call of \reco is a call invoked in line~\ref{line:recolor_root} of \cola. A {\em phase} is the collection of steps involved in the execution  of a call of \reco (root or not) or the collection of steps executed in line~\ref{line:color_initial} of \cola. The latter phase is called the initial phase  (phases are nested).  The time complexity of  \cola is expressed in terms of the number of invocations (root or not)  of \reco, i.e. the number of phases.

Notice that in this algorithm,  each non-root invocation of  \reco chooses  a cycle sharing an edge with the cycle of the calling \reco. So this algorithm is in the spirit of the algorithm in the original paper by Moser \cite{moser2009constructive};  in  the subsequent algorithm by Moser and Tardos \cite{moser2010constructive},   the event chosen at the end of each phase does not necessarily share a variable with  the event of the previous phase.

The following follows easily from line~\ref{line:color_check} of \cola.

\begin{lemma}\label{stops}
\cola  outputs an acyclic edge coloring {\em if it ever stops}. 	
\end{lemma}
The main result of this paper, Theorem \ref{colthm}, states that if there are at least $ 3.74(\Delta-1)+1 $ available colors, then the probability that \cola lasts for at least $n$ phases is $<1$ for sufficiently large $n$ (actually, Theorem \ref{colthm} states that this probability is subexponential), therefore there is an acyclic edge coloring.

\begin{lemma}\label{lem:colorprogress}
Consider an arbitrary call of {\em \reco}($e,C$). Let $\mathcal{E}$ be the set of edges   that do not belong to a   bichromatic cycle at the beginning of this call together with the set of edges of $C$. Then, if the call terminates, the edges  in $\mathcal{E}$ do not belong to any bichromatic cycle at the end of {\em \reco}($e,C$). \end{lemma}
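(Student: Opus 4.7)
The plan is to mirror the proof of Lemma~\ref{lem:progress} virtually step-for-step, with ``bichromatic cycle sharing an edge with $C$'' playing the role of ``event in $N_j$'' and ``edge'' replacing ``variable''. The key structural ingredient is again the \textbf{while} loop inside \emph{\reco}: whenever a call \emph{\reco($C'$)} returns, the loop invariant guarantees that no bichromatic cycle sharing an edge with $C'$ currently exists.

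I would argue by contradiction. Suppose some edge $e\in\mathcal{E}$ lies on a bichromatic cycle $C^{*}$ at the moment \emph{\reco($C$)} terminates. Since $e$ was on no bichromatic cycle when the call began, at least one edge of $C^{*}$ must have been recolored during the execution. Every recoloring happens inside some (root or recursive) invocation \emph{\reco($C'$)} whose cycle $C'$ contains the recolored edge, hence shares that edge with $C^{*}$. So the collection $\mathcal{S}$ of invocations \emph{\reco($C'$)} made during (or equal to) \emph{\reco($C$)} for which $C'$ shares an edge with $C^{*}$ \emph{and} during which some edge of $C^{*}$ is recolored is nonempty.

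Assuming \emph{\reco($C$)} terminates, only finitely many nested invocations are ever made, so $\mathcal{S}$ is finite and contains a last-terminating element \emph{\reco($C'_{\mathrm{last}}$)}. By maximality, no edge of $C^{*}$ is recolored between the termination of \emph{\reco($C'_{\mathrm{last}}$)} and the termination of \emph{\reco($C$)}; hence the coloring restricted to the edges of $C^{*}$ coincides at these two moments, and so $C^{*}$ is already bichromatic when \emph{\reco($C'_{\mathrm{last}}$)} returns. But $C'_{\mathrm{last}}$ shares an edge with $C^{*}$, which contradicts the \textbf{while} condition of \emph{\reco($C'_{\mathrm{last}}$)}, completing the argument.

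The only delicate point is the existence and correct use of $C'_{\mathrm{last}}$: nonemptiness of $\mathcal{S}$ relies on $e\in\mathcal{E}$ forcing the bichromaticity of $C^{*}$ to have been created during \emph{\reco($C$)}, and the ``last'' choice is what lets us transport the bichromaticity of $C^{*}$ backward in time to a point where the \textbf{while}-loop invariant of that particular call must have already been enforced. Beyond this bookkeeping I do not expect any further obstacle, since the local structure of \cola{} mirrors that of \malg{} precisely.
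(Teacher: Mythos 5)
Your proof is correct and follows essentially the same route as the paper: both argue by contradiction via the \textbf{while}-loop invariant of \reco, identifying a recoloring call whose cycle shares an edge with the offending bichromatic cycle $C^{*}$ and whose termination condition is thereby violated. If anything, your explicit choice of the \emph{last-terminating} such call is a slightly cleaner formalization than the paper's ``the same argument can be reapplied'' phrasing (the paper only makes this ``last'' device explicit in its proof of the analogous Lemma~\ref{lem:progress}).
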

\begin{proof}  
Consider an edge $e' \in \mathcal{E}$. Assume first that $e'$ does not belong to $C$. 
Then, if $e'$ is in a bichromatic
    cycle $C'$ after the execution of {\reco}($e,C$), that cycle must have
    become bichromatic during the execution.
Therefore there is a cycle $C^-$  and an edge of it $e^-$ such that the process {\reco}($e^-; C^-$) was called as a recursive call of { \reco}($e, C$) and   $C'$ and $C^-$ share an edge and $C'$ became bichromatic during  {\reco}($e^-; C^-$). 
But the call { \reco}($e^-;C^-$) will not 
terminate until all    edges in $C^-$ do not belong to a bichromatic cycle.  Assuming { \reco}($e,C$) terminates, {\reco}($e^-;C^-$) must have also terminated. So   $C'$ cannot be bichromatic at the end of {\reco}($e^-;C^-$).  
The same argument can be reapplied every time $e$ changes color and becomes an element of  a bichromatic  cycle during { \reco}($e,C$). 

If, on the other hand,  $e' \in C$, then by definition $e'$ does not belong to any bichromatic cycle at the end of {\reco}($e,C$). \end{proof}
As an immediate corollary we get:
\begin{corollary}\label{cor:colorprogress} The cycles of   root calls of \reco  have pairwise  distinct sets of edges. Therefore the number of root phases is  at most $m$,   the number of edges of the graph. 
	
\end{corollary}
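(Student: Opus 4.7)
The approach mirrors that of Corollary \ref{cor:phases}: I will argue that once a root call has finished working on a cycle, all edges of that cycle are permanently protected from being part of a bichromatic cycle in the future, so the cycles of distinct root calls must be edge-disjoint.

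Let $C_1, C_2, \ldots$ enumerate the cycles of the successive root calls of \reco made by \cola. First I would observe that when \reco($C_i$) exits, the exit condition of its \textbf{while} loop guarantees that no bichromatic cycle sharing an edge with $C_i$ exists; in particular, every edge of $C_i$ lies outside every bichromatic cycle at that moment. Now I would transfer this property forward by applying Lemma \ref{lem:colorprogress} to each later root call \reco($C_{i+1}$), \reco($C_{i+2}$), \ldots in turn: the set $\mathcal{E}$ of edges not lying on any bichromatic cycle at the start of such a call contains $E(C_i)$, and the lemma says this property is preserved at the end of the call. A straightforward induction then yields that at the moment \reco($C_j$) is invoked, for any $j>i$, no edge of $E(C_i)$ belongs to any bichromatic cycle. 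Since $C_j$ is by definition bichromatic at its invocation, we conclude $E(C_i)\cap E(C_j)=\emptyset$, so the cycles of root calls are in fact pairwise edge-disjoint, which is stronger than the distinctness claimed.

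To deduce the $O(m)$ bound on the number of phases, I would invoke the fact that a bichromatic cycle in any properly edge-colored graph must have even length (an odd cycle, being properly colored, needs at least three colors). Hence every $C_i$ has at least $4$ edges, and pairwise disjointness of the $E(C_i)$'s inside the $m$-edge graph $G$ forces the number of root calls to be at most $m/4=O(m)$.

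The only mild obstacle is the inductive bookkeeping needed to carry the ``safety'' of $E(C_i)$ across arbitrarily many intervening root and recursive \reco calls, but this is provided uniformly by Lemma \ref{lem:colorprogress}, which applies to any terminating call regardless of whether it is root or recursive. No new combinatorial estimate is required beyond the length lower bound on bichromatic cycles.
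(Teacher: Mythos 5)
Your argument is correct and is exactly the intended one: the paper derives this corollary from Lemma \ref{lem:colorprogress} in direct analogy with Corollary \ref{cor:phases}, using the exit condition of the \textbf{while} loop of \reco plus the lemma to propagate the ``no longer on a bichromatic cycle'' property of $E(C_i)$ through all later calls, yielding pairwise edge-disjointness (which, as you note, is what actually gives the $O(m)$ bound, since mere distinctness of the cycles would not). Your extra remark about bichromatic cycles having even length is harmless but unnecessary; any positive lower bound on cycle length already gives $O(m)$.
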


We define labeled forests to be rooted forests  whose nodes are labeled with pairs $(e, C)$, where $e$ is an edge, called the edge-label, and $C$ is an even  cycle of half-length $\geq 3$ that contains $e$, called the cycle-label. We consider such labeled forests as ordered by ordering the set of roots and each set of siblings (children of the same parent) according to the order of  their respective edge-label.  We also define:
 \begin{definition}\label{def:colorfeas}
A labeled   forest is called {\em feasible} if 
\begin{enumerate}[label = \roman*.]
\item \label{itmi} Pairwise,  the cycle-labels of the roots of the trees do not share an edge,
\item \label{itmii} pairwise the  cycle-labels of the children of every node do not share an edge,
\item \label{itmiii} if $C$ is the cycle-label of a node $u$, the edge-label of any child of $u$ belongs to $C$. 
\end{enumerate}
\end{definition}

Feasible forests are intended to represent the structure of recursive calls to \reco. For technical reasons,   we  add to a feasible forest some new leaves to which we assign  just an edge-label (their cycle-label can be taken to be an empty cycle):  First  we add  new  trees comprised of a root only, so that the set of edge-labels of all the roots of the trees of forest becomes  equal to the set of edges of the graph; second, we hang from each  original leaf $v$ of $\mathcal{F}$ as many new leaves as the edges of the cycle-label of $v$, and we label  them so that   the set of their  edge-labels  coincides with the set of edges of the cycle-label of $v$.  So  we assume in the sequel  that there are exactly $m$ roots whose edge-labels comprise the set of all edges,  and that any internal  node with a cycle-label $C$ of half-length $k$ has exactly $2k$ children whose edge-labels comprise the set of all edges in $C$.  Traversing the internal nodes of each tree of a feasible forest    in pre-order (depth-first),  and visiting the trees in the order of their roots' edge-labels, we obtain the forest's  label-sequence $$\mathcal{L} = (e_1, C_1), \ldots, (e_n, C_n)$$ (labels of leaves are not included in the label-sequence). 
 
\begin{definition}
The $n$-witness forest (or just the witness forest, when $n$ is clear from the context) of an execution  of \cola with at least $n$ phases is the feasible forest that results by creating one node per each of the $n$ invocations of \reco, labeling it by its argument, and structuring the trees of the forest  as the  calls of \reco appear in the recursion stack of  each root phase, i.e.  a node labeled with $(e_2,C_2)$ is a progeny of a node labeled with $(e_1, C_1)$ if {\em \reco}($e_2,C_2$) is called while \mbox{{\em \reco\!\!($e_1, C_1$)}} is executed (additional leaves as described in the previous paragraph are also added).   
\end{definition}

\section{A Bound for the Acyclic Chromatic Index}\label{sec:ACI}
Towards finding an upper    bound that  \cola lasts for at least   $n$ phases,
we consider  the   algorithm \valc in Figure \ref{fig:colvalalg} that takes as input an arbitrary  sequence $$\mathcal{S} = (e_1^1, e_1^2, k_1),\ldots, (e_n^1, e_n^2, k_n),$$ such that  for all $s=1, \ldots, n$,  $e_s^1$ and  $e_s^2$ are adjacent edges contained in some cycle of half-length $ k_s\geq 3$, and appear in this order in the cycle's positive traversal, when we start from $e_s^1$. We call such sequences {\em admissible sequences.}

\renewcommand{\labelenumi}{(\alph{enumi})}
\renewcommand{\theenumi}{(\alph{enumi})}
\algnewcommand{\myIf}[1]{\State\algorithmicif\ #1}
\algnewcommand{\myThen}[1]{\State\algorithmicthen\ #1}
\algnewcommand{\myEndIf}{\State \algorithmicend\ \algorithmicif}
\algnewcommand{\myElse}[1]{\State\algorithmicelse\ #1}
\newcommand\algotext[1]{\end{algorithmic}#1\begin{algorithmic}[1]}

\begin{figure}[h]
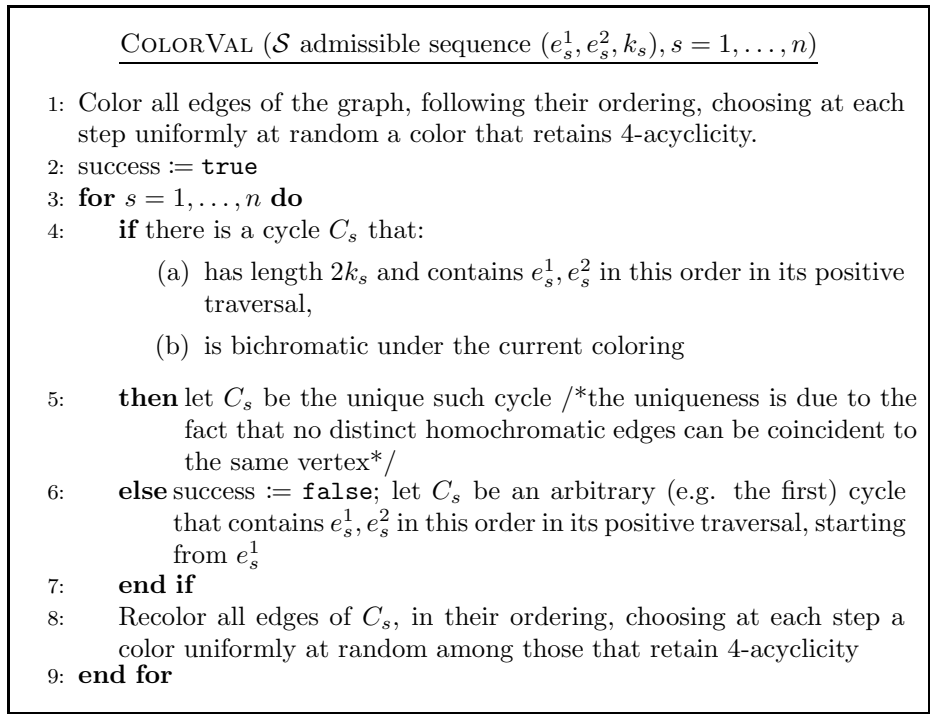

\begin{framed}
\begin{center}{\underline{\valc ($\mathcal{S} \text{ admissible sequence }   (e^1_s, e^2_s, k_s), s=1, \ldots, n)$}}\end{center}
\begin{algorithmic}[1]
\State Color all edges of the graph, following their ordering,  choosing at each step uniformly at random  a color that retains \mbox{4-acyclicity.}\label{line:firstphase}
\State $\text{success} \coloneqq \text{\tt true}$
\For{$s =1 ,\ldots, n$} \label{line:forloop}\label{line:nextphases}
   \myIf  {\parbox[t]{\dimexpr\linewidth-2.5em}{there is a  cycle $C_s$ that:
   \begin{enumerate} \item  has  length $2k_s$ and  contains $e^1_s, e^2_s$ in this order in its positive traversal, \item   is  bichromatic  under the current coloring  \end{enumerate} }} \label{line:cyclecond}  
   \myThen {\parbox[t]{\dimexpr\linewidth-3.5em} {let $C_s$ be the unique such cycle /*the uniqueness is due to the fact that no distinct homochromatic edges can be coincident to the same vertex*/ }}
   \myElse {\parbox[t]{\dimexpr\linewidth-3.5em}{$\text{success} \coloneqq \text{\tt false}$; let $C_s$ be an arbitrary (e.g. the first)  cycle that contains $e^1_s, e^2_s$ in this order in its positive traversal, starting from $e_s^1$ \strut}}   \myEndIf 
\State \parbox[t]{\dimexpr\linewidth-1.5em }{Recolor  all edges  of $C_{s}$, in their ordering,   choosing at each step a color uniformly at random among those  that retain 4-acyclicity }\label{line:s-phase}\EndFor  
\end{algorithmic}
\end{framed}
\caption{ \textit{The  coloring  validation algorithm}}\label{fig:colvalalg}
\end{figure}

Observe that \valc on input an arbitrary admissible sequence $\mathcal{S}$, with $|\mathcal{S}|=n$,   always produces as output a  uniquely defined sequence of cycles $\mathcal{C}= C_1, \ldots, C_n$ (but, alas,  may end up with success = {\tt false}).  The notion of ``phases" for \valc is defined in analogy to \cola, i.e. phases are comprised  of the number of steps   that are  involved with  the execution of line \ref{line:firstphase} (initial phase)  or with the execution    of a loop of line \ref{line:nextphases} of \valc (phase $s$).

Given a feasible forest $\mathcal{F}$ whose  label-sequence is $(e_s, C_s), s=1, \ldots,n$ {\em the corresponding admissible sequence} $\mathcal{S}_\mathcal{F}$ is obtained by letting  $e_s^1$ be  $e_s$, $e_s^2$ be the edge of $C_s$  following $e_s$ in $C_s$'s positive traversal and $k_s$ be the half-length of $C_s$, $s=1, \ldots, n$. Given an admissible sequence $\mathcal{S}$ let $\frak{F}_\mathcal{S}$ be the {\em class} of all feasible forests $\mathcal{F}$ such that $\mathcal{S}_\mathcal{F} = \mathcal{S}.$

\begin{proposition}\label{prop:proposition}  Given an admissible sequence ${\mathcal S}$
\begin{multline}\label{eq:proposition} 
\sum_{{\mathcal F} \in {\frak F}_{\mathcal S}} \Pr[ \text{ \cola executes with  witness forest } {\mathcal F}] \leq \\
	\Pr[\text{\valc is successful on input } {\mathcal S}].
\end{multline}

\end{proposition}
\begin{proof} We prove first that the probability  $P$   for at least one ${\mathcal F} \in {\frak F}_{\mathcal S}$ being a witness forest of \cola is bounded from above by the probability that \valc is successful on input  ${\mathcal S}$. 
For this it is sufficient to notice that if the random choices made by an execution of \cola that produces an arbitrary  ${\mathcal F} \in {\frak F}_{\mathcal S}$ 
are made by \valc on input ${\mathcal S}_{\mathcal F} = {\mathcal S}$, 
then \valc is successful. The result now follows by observing the the events that 
${\mathcal F}$ is a witness forest of \cola for various 
${\mathcal F} \in {\frak F}_{\mathcal S}$ are mutually exclusive, therefore $P$ can be written as the sum in the lhs of \eqref{eq:proposition}. \end{proof}

Given an admissible sequence ${\mathcal S} = (e_1^1, e_1^2, k_1), \ldots, (e_n^1, e_n^2, k_n) $, 
an arbitrary  sequence  $ {\mathcal C} = C_1, \ldots, C_n$  of cycles such that $C_s, s=1, \ldots, n,$ has length $2k_s$ and contains $e_s^1, e_s^2$,  in this order in its positive traversal, starting from $e_s^1$,   is called  a cycle-sequence {\em associated} with the admissible ${\mathcal S}$ (a cycle-sequence ${\mathcal C}$ associated with ${\mathcal S}$ need not necessarily be the cycle-sequence produced by \valc).

Given a cycle-sequence ${\mathcal C} = (C_s)_s$ associated with an admissible sequence ${\mathcal S} = (e_s^1, e_s^2, k_s)_s$,  let the other edges of $C_s$ in its positive traversal be $e_s^3, \ldots e_s^{2k_s}$ and let $v_s^i, i=1, \ldots, 2k_s,$ be the vertices of $C_s$  with $v_s^{i}, v_s^{i+1 \mod 2k_s}$ being the endpoints of $e_s^i$.  
Finally let $2 \hat{k} = \sum_{s=1}^n (2k_s -2)$.

For linguistic convenience, we call the edges $e_s^1, e_s^2$ the {\em first and second pivotal edges}  of $C_s$, respectively (these are common to all ${\mathcal C}$).   
 Let $e_s^{l^1_s}$ (respectively, $e_s^{l^2_s}$) be the edge of $C_s$ that gets the color it has at  phase $s$ of \valc   earliest among the edges of $C_s$  that have equal parity with $e_s^1$ ($e_s^2$, respectively). We call  $e_s^{l^1_s}, e_s^{l^2_s}$ the {\em early} edges of $C_s$. 
 
We call  {\em instants} of an  execution of \valc the discrete successive values $t=1, 2, \ldots$ of a time parameter at which \valc assigns a color to an edge. Reassigning the same color is assumed to take place at a new instant. 

Call the color an  edge has   immediately before the $s$'th repetition of loop \ref{line:forloop}  the edges's color at phase $s$. Also call an edge's  color at phase 1 the edge's {\em initial}  color. Given an edge $e$ and a phase number $s$, let $\mathrm{time}(s; e$) be the latest instant before phase $s$ starts that  $e$ was assigned the color it has at phase $s$, and let $\chi(s; e)$ be this color. 
\begin{remark}\label{rem:distincttime}
Notice that in general it is possible to have two distinct phases $s', s$ and  an edge $e$ so that  $\mathrm{time}(s'; e) = \mathrm{time}(s; e)$, because an edge may not be assigned a color for the duration of several phases; however if $C_{s'}$ is the cycle produced by an execution of \valc at phase $s'$,  $s' <s$, and $e$ belongs to $C_{s'}$     then $\mathrm{time}(s'; e) <  \mathrm{time}(s; e)$, because the cycles of the output cycle-sequence  are recolored at each repetition of loop \ref{line:forloop}. 	
\end{remark}

Let $c^1_s, c^2_s$ be the colors that the early edges   $e^{l^1_s}_s, e^{l^2_s}_s$, respectively, of a cycle $C_s$ have at phase $s$ i.e. $c^j_s = \chi(s; e_s^{l^j_s}), j=1,2$. 

\begin{remark}\label{rem:monty} To avoid a possible misinterpretation, let us stress here that we do {\em not} assume that $c^1_s, c^2_s$  are  a priori fixed  and that the   executions considered are conditioned  on 
 the early edges of $C_s$ taking the  colors $c^1_s, c^2_s$, respectively. Rather, given an {\em arbitrary}  execution,  $c^1_s, c^2_s$ denote   the colors the early edges    {\em will} have  at phase $s$.  Much like the   Monty Hall problem \cite{selvin1975problem},  where the assumption that  Monty Hall opens  box A    does not mean that the choice of the contestant is conditioned  on Monty Hall  always  opening box A. ``Box A"  is just a ``name" for whatever box Monty Hall will open. Its actual value  depends on the box selected by the contestant. 
 \end{remark}
 
 Given a phase number $s$ a color $c$  and an  edge  $e$ (belonging to $C_s$ or not),  the event of   {\em assignment of color $c$} for phase $s$ (CA-event, in short), notationally  $\mathrm{CA}(s, c; e)$, is the event that occurs  if the color assigned to $e$ by \valc at instant 
$t = \mathrm{time}(s; e) $  is $c$.  If $e$ belongs to the cycle $C_s$, i.e. if $e= e_s^i$ for some $i$,  then the CA-event $\mathrm{CA}(s, c; e)$ for $c= c_s^{((i+1) \mod 2) +1}$ is called the the   {\em correct color assignment event} (CCA-event, in short), notationally  $\mathrm{CCA}(C_s; e)$. In other words, a CCA-event occurs when the edge of its argument takes the color of the corresponding early edge of the   cycle of its argument. We refer to  the color $c_s^{((i+1) \mod 2) +1}$ for the edge $e_s^i$ as $e_s^i$'s ``correct" color  for phase $s$.

\remove{Given $C_s$ and an edge  $e$ (belonging to $C_s$ or not),  the   {\em correct color assignment event} (CCA-event, in short), notationally  $\mathrm{CCA}(C_s; e)$, is the event that occurs  if the color assigned by \valc at instant 
$t = \mathrm{time}(s; e) $ to $e$ is $c_s^{((i+1) \mod 2) +1}$.  In other words, a CCA-event occurs when the edge of its argument takes the color of the corresponding early edge of the   cycle of its argument. }

The chronological order of the edges $e_s^i$  of a cycle sequence $\mathcal{C}$ is the order induced by the ordering of $\mathrm{time}(s; e_s^i)$ (recall Remark \ref{rem:distincttime}).
\begin{definition}\label{def:anterior}
An {\em anterior}  conditional for an event $\mathrm{CA}(s,c; e)$ is any conjunction of events $\mathrm{CA}(s',c'; e')$, or negations of them,  such that $\mathrm{time}({s'}; e') <\mathrm{time}(s; e)$.  \end{definition}

 \begin{lemma}\label{lem:probccaevent}
The probability of any   CA-event, not referring to a cycle's early edge and its correct color, given any {\em anterior} conditional, is at most $$\frac{1}{\gamma(\Delta-1)+1}.$$
\end{lemma}
\begin{proof} 
It follows immediately from the fact that \valc assigns a color chosen uniformly at random from the set of colors that do not presently destroy 4-acyclicity and that this set, by Lemma 	\ref{sufficientcolors}, has cardinality at least 
$\gamma(\Delta-1)+1$, independently of past assignments (recall that  the total number of available colors is $K=\ceil{(2 + \gamma)(\Delta-1)} +1 $). 
\end{proof}

 \begin{remark}\label{rem:pivotal} [Identifying the pivotal edges with their corresponding early edges]\label{rem:pivotal} Observe that by definition, the CCA-events corresponding to the early edges of cycles occur with probability 1. To keep the notation simple, we will consider  CCA-events $\mathrm{CCA}(C_s; e_s^i)$ for all $i \geq 3$ but with the notational convention that  for $j=1,2$,   
 $\mathrm{CCA}(C_s; e_s^{l_s^j})$ stands for $\mathrm{CCA}(C_s; e_s^j)$, 
 i.e.   the  CCA-event of an early edge  is, by convention,  just  the CCA-event of the corresponding pivotal edge. 
 \end{remark}

\begin{lemma}  \label{lem:ourapproach} Given an admissible  sequence ${\mathcal S}= (e^1_s, e^2_s, k_s), s=1, \ldots, n$,  then for  the  probability of success  of \valc on input ${\mathcal  S}$ we have that:    
\begin{multline}\label{eq:bound} \Pr\left[ \text{\valc}  \text{\rm{ is successful on }} {\mathcal S}\right] \leq \\
 \left(\frac{1}{\gamma(\Delta-1)+1}\right)^n \prod_{s=1}^n\left(1 - \left(1 -\frac{1}{\gamma(\Delta-1)+1}\right)^{\Delta-1}\right)^{2k_s-3}. \end{multline}	
\end{lemma}

\begin{proof} 
Given a cycle-sequence ${\mathcal C}$ associated with $\mathcal{S}$, the edges incident on $v_s^i$ other 
than $e_s^{i-1}$ are called these edges stemming out $v_s^i$. Let these edges in chronological order be  $o^1, \ldots, o^{\delta_s^i}$,  where  $\delta_s^i$ denotes their cardinality (the $o^l$'s themselves depend on $s$ and $i$, however for brevity we omit the  corresponding indices).
 Obviously $\delta_s^i \leq \Delta-1$, 
where $\Delta$ is the max degree of the underlying graph $G$.  When $e_s^{i-1}$ is the previous to last edge of $C_s$ in its  positive traversal, we set $\delta_s^i =1$. Also, let the {\em parity} of a non-pivotal edge $e_s^i$ of  $C_s$ be 1 or 2 depending on whether $e_s^i$ is at odd distance from $e_s^1$ or $e_s^2$.

Let 
${\mathcal E}_{\mathcal C}^1, \ldots, {\mathcal E}_{\mathcal C}^{2\hat{k}}$ denote the 
CCA-events $\mathrm{CCA}(C_s; e_s^i), 1\leq s \leq  n$, $ i=3, \ldots, 2k_s$ 
of $\mathcal{C}$
in {\em chronological order} (by Remark~\ref{rem:distincttime}, we may assume that these events correspond to distinct time instants). Let  $\mathcal{A}_t$ denote the conditional that the chronologically first $t-1$ CCA-events of $\mathcal{C}$ hold ($\mathcal{A}_{t}$ depends on the execution). 

The {\em geometric order} of the edges of $\C$ is the one where the edges in each cycle are ordered as they are traversed in its positive traversal, whereas the cycles are ordered by their index (corresponding phase number).   Assume that the CCA-event for the $k$'th such edge ($k=1, \ldots, 2\hat{k}$)    in the geometric  order  is $\mathcal{E}_{\mathcal C}^{t_k}$. In other words,  $\mathcal{E}_{\mathcal C}^{t_k}, t=1, \ldots, 2\hat{k}$ is a reordering, in the sense of the geometric order,  of the chronological order $\mathcal{E}_{\mathcal C}^{t}, t=1, \ldots, 2\hat{k}$ (the reordering depends on the execution). 

We now define a random variable, denoted by  $\C$,  over the random color-choices of \valc.
$\C$ is the unique cycle-sequence such that for  every cycle $C_s$ of $\C$, every  non-pivotal edge $e_s^i$  is either the (unique) edge stemming out of $v_s^i$ that, at the beginning of phase $s$,  has the same color  as the pivotal edge of $C_s$ of the same parity, if there is such one,  or is the first edge stemming out of $v_s^i$ in some predetermined order of all edges of the underlying graph, otherwise. This is a well defined function over the space of random color-choices of \valc because we cannot have two homochromatic edges incident onto the same vertex. It is admittedly somewhat confusing that until now the notation $\C$ referred to an arbitrary but fixed  cycle-sequence, whereas in the sequel it denotes a  random variable depending on the random choices of \valc, yet we believe that we thus avoid overloading the notation. Once $\C$ is seen as an well-defined random variable, all the parameters defined up to now  in terms of a fixed  arbitrary $\C$, like $e_s^i$ (the $i$-th edge in the positive traversal of the $s$-th cycle of $\C$),   $\delta_s^i$ (the number of edges stemming out of the $v_s^i$), $o^l$ (the chronologically $l$-th edge stemming out of $v_s^i$) 
 etc. become well-defined random variables, depending only on the random choices of \valc and its input $\seq$. Also the events previously described  remain meaningful with $\C$ being the  random variable defined above instead of arbitrary and fixed. 
 
Now, observe that:
\begin{align}\label{eq:randomC}
\Pr\left[ \text{\valc}  \text{\rm{ is successful on }} {\mathcal S}\right]  & = 
\Pr\left[\C \text{\rm \ is bichromatic}\right]   \nonumber  \\ & =
  \prod_{t=1}^{2\hat{k}}\Pr\left[ {\mathcal E}_{\mathcal C}^t \mid {\mathcal A}_t\right]     =  \prod_{k=1}^{2\hat{k}} \Pr\left[ {\mathcal E}_{\mathcal C}^{t_k} \mid {\mathcal A}_{t_k}\right].	
\end{align}
Indeed, we obviously have that the event ``$\text{\valc}  \text{\rm{ is successful on }} {\mathcal S}$" is implied by the event ``$\C \text{\rm \ is bichromatic}$" ($\C$ is said to be  bichromatic if for all $s$, $C_s$ is bichromatic at the beginning of phase $s$). Also because no two homochromatic edges  can be incident to the same vertex, we have that the latter event is implied by the former, so we get the first equality of Equation \ref{eq:randomC}. The second equality follows by Remark \ref{rem:pivotal} and the third one is trivial.

  In order to find an upper bound for  $\Pr\left[ {\mathcal E}_{\mathcal C}^{t_k} \mid {\mathcal A}_{t_k}\right]$, first assume  that for some $s=1, \ldots, n, i=3, \ldots, 2k_s$, the event  ${\mathcal E}_{\mathcal C}^{t_k}$ is the event $\mathrm{CCA}(C_s;e_s^i)$. For brevity let $\delta_k$ denote $\delta_s^i$ and $c_k$ the correct color for $e_s^i$.

Let now $E^l$ be the event ``$(e_s^i = o^l) \land  \mathrm{CA}(s; c_k,  o^l)"$,  $l=1, \ldots, \delta_k$. Intuitively, $E^l$ means that $e_s^i$ is the chronologically $l$'th edge stemming out of $v_s^i$ and it gets the correct color. 
In case $e_s^i$ is a non-pivotal early edge of a cycle $C_s$ (i.e. the earliest edge among the same parity edges of $C_s$), then $E^l$ stands for ``$(e_s^i = o^l) \land  \mathrm{CA}(s; c_k,  e)"$, where $e$ is the pivotal edge $e_s^1$ or $e_s^2$ of the same parity as $e_s^i$ and $c_k$ is $e_s^i$'s color (recall Remark \ref{rem:pivotal}).
 
We have: 

\begin{equation}\label{eq:instantiations}
\Pr\left[ {\mathcal E}_{\mathcal C}^{t_k} \mid {\mathcal A}_{t_k} \right] =  
      \Pr\left[ \mathrm{CA}(s, c_k; e_s^i) \mid {\mathcal A}_{t_k} \right] =   \Pr\left[ \bigvee_{l=1}^{\delta_k} E^l \mid {\mathcal A}_{t_k}  \right], 
\end{equation}
because always (deterministically) $e_s^i$ is one of the edges that stem out of $v_s^i$.  Intuitively,  $\Pr\left[ \bigvee_{l=1}^{\delta_k} E^l \mid {\mathcal A}_{t_k}  \right]$ denotes the probability  that one of the edges stemming out of $e_s^i$ gets the correct color.

Observe that:
\begin{align}\label{eq:negation}
\Pr\left[ \bigvee_{l=1}^{\delta_k} E^l \mid {\mathcal A}_{t_k}  \right] &= 1 - 	\Pr\left[ \bigwedge_{l=1}^{\delta_k} (\neg E^l )\mid {\mathcal A}_{t_k}  \right] \nonumber \\&=  1-\prod_{l=1}^{{\delta}_k}\Pr\left[ \neg E^l \mid \left(\bigwedge_{m=1}^{l-1}\neg E^m\right) \land \mathcal{A}_{t_k}  \right].
\end{align}
Intuitively, the expression $\Pr\left[ \bigwedge_{l=1}^{\delta_k} (\neg E^l )\mid {\mathcal A}_{t_k}  \right]$ denotes the probability that none of the possible instantiations of $e_s^i$ (i.e. the edges that stem out of the   $v_s^i$)  gets the correct color.

Now observe that:
\begin{itemize}
\item[(i)]	\label{it:i}
for an $l$'s for which   $e_s^i \neq o^l$, it holds that $\Pr\left[ \neg E^l \mid \left(\bigwedge_{m=1}^{l}\neg E^m\right) \land \mathcal{A}_{t_k} 
 \right] = 1$, whereas 
\item[(ii)] \label{it:ii} for an $l$ for which $e_s^i = o^l$, it holds that 
 $$ \Pr\left[ \neg E^l \mid \left(\bigwedge_{m=1}^{l-1}\neg E^m\right) \land \mathcal{A}_{t_k}  \right] \geq 1 - \frac{1}{\gamma(\Delta-1)+1},$$ because when $e_s^i = o^l$, the event  $E^l$ is equivalent to $\mathrm{CA}(s, c_k; o^l)$ and the conditional 
 $$\left(\bigwedge_{m=1}^{l-1}\neg E^m\right) \land \mathcal{A}_{t_k}$$ is equivalent to  
 $\mathcal{A}_{t_k}$, which   is anterior to $\mathrm{CA}(s, c_k; o^l)$, because $e_s^i = o^l$. 
\end{itemize} 
Therefore for all $l$, we have that  $$ \Pr\left[ \neg E^l \mid \left(\bigwedge_{m=1}^{l-1}\neg E^m\right) \land \mathcal{A}_{t_k}\right]  \geq 1 - \frac{1}{\gamma(\Delta-1)+1},$$

 Therefore: 
 $$\Pr\left[ \bigwedge_{l=1}^{\delta_k} (\neg E^l) \mid {\mathcal A}_{t_k}  \right] \geq \left(1 - \frac{1}{\gamma(\Delta-1)+1}\right)^{d_k},$$
  where $d_k =1 $  if $k$ corresponds to  the last (in the geometric order) edge $e_s^i$ of $C_s$, and $d_k = \Delta-1$ otherwise ($d_k$ is {\em not} random).

Therefore, from \eqref{eq:instantiations} and  \eqref{eq:negation} we get that:
 $$  \Pr\left[ {\mathcal E}_{\mathcal C}^{t_k} \mid {\mathcal A}_{t_k} \right]  \leq 1 -  \left(1 - \frac{1}{\gamma(\Delta-1)+1}\right)^{d_k }.$$
Observe that among the   $k= 1, \ldots, \hk$ 
there are $n$ values of $k$ for which  $  d_k =1 $ (the $k$'s for which $e_s^i$ is the closing edge of $C_s$), whereas for the remaining  $\sum_{s=1}^n (2k_s -3)$ values of $k$,  we have $ d_k = \Delta-1$. 
So we can conclude from \eqref{eq:randomC} that    that $\Pr\left[ \text{\valc}  \text{\rm{ is successful on }} {\mathcal S}\right]$ 
  is bounded from above by: \begin{equation*}\left(\frac{1}{\gamma(\Delta-1)+1}\right)^n \prod_{s=1}^n\left(1 - \left(1 -\frac{1}{\gamma(\Delta-1)+1}\right)^{\Delta-1}\right)^{2k_s-3}. \qedhere \end{equation*}\end{proof}
\begin{remark}\label{rem:vertexc}
The fact that no two homochromatic edges can be incident onto the same vertex is essential for the correctness of the above proof. For example in the case of vertex coloring, where we can have more than one homochromatic vertices neighboring with the same vertex, the proof does not go through. Indeed, even if we define the random variable $\C$ to be  the unique cycle-sequence such that for  every cycle $C_s$ of $\C$, every  non-pivotal edge $e_s^i$  is either the first  edge stemming out $v_s^i$ that has the same color as the pivotal edge of $C_s$ of the same parity, if there is one,  or is the first edge stemming out of $v_s^i$, if there is none, still Equation \ref{eq:randomC}	can only take the form:
\begin{align}\label{eq:randomC2}
\Pr\left[ \text{\valc}  \text{\rm{ is successful on }} {\mathcal S}\right]  & \geq
\Pr\left[\C \text{\rm \ becomes bichromatic}\right]   \nonumber  \\ & = 
  \prod_{t=1}^{2\hat{k}}\Pr\left[ {\mathcal E}_{\mathcal C}^t \mid {\mathcal A}_t\right]     =  \prod_{k=1}^{2\hat{k}} \Pr\left[ {\mathcal E}_{\mathcal C}^{t_k} \mid {\mathcal A}_{t_k}\right].	
\end{align}\end{remark}
\begin{corollary}\label{cor:ourapproach}
If ${\mathcal S}$ is as in Lemma \ref{lem:ourapproach}  then  
\begin{multline}\label{eq:ourapproach} 
\Pr\left[ \text{\valc}  \text{\rm{ is successful on }} {\mathcal S}\right]
\leq  \\ \left(\frac{1}{\Delta-1}\right)^n  \prod_{i=1}^n \left(\frac{1}{\gamma}\left(1 -  e^{-\frac{1}{\gamma}} \right)^{2k_i-3}\right).
\end{multline}
\end{corollary}
\begin{proof} 
Use the  inequality $$1-x>e^{-\frac{x}{1-x}}, \forall x<1, x \neq 0$$  (see e.g. \cite[inequality (4.5.7)]{NIST:DLMF})      
for $x=\frac{1}{\gamma(\Delta-1)+1}$, which implies, after elementary operations,  that $\left(1 -\frac{1}{\gamma(\Delta-1)+1}\right)^{\Delta-1}$  is at least $e^{-\frac{1}{\gamma}}$.
\end{proof}

We now turn to the estimation of the probability, call it $\hat{P}_n$,   that  \cola lasts for {\em at least}  $n$ phases.  
\begin{theorem} \label{thm:recurence} The expression $P_n$ defined as

\begin{equation}\label{eq:colfinalbound}
 {P}_n = \sum_{\substack{n_1+\cdots+n_{m} = n \\ n_1, \ldots, n_{m} \geq 0}} Q_{n_1} \cdots Q_{n_m},
\end{equation}
where $Q_n$ is defined by the recursion:
\begin{equation}
\label{eq:colsumvalid}
Q_n =  \sum_{k \geq 3}   \Bigg( \left(\frac{1}{\gamma}\left(1 -  e^{-\frac{1}{\gamma}} \right)^{2k-3}\right)  \cdot \sum_{\substack{n_1+\cdots+n_{2k} = n-1 \\ n_1, \ldots, n_{2k} \geq 0}} Q_{n_1} \cdots Q_{n_{2k}}\Bigg), 
\end{equation}
with $Q_0=1$, is an upper bound of $\hat{P}_n$.	
\end{theorem}
\begin{proof}
Let $f$ be an {\em unlabeled}  ordered forest with $m$ trees (recall, $m$ is the number of edges of $G$) and with $n$ internal nodes,  each, considered in their pre-order,  having an even degree $2k_s, s=1, \ldots, n$ ($k_s \geq 3$). Let $p_f$ be the probability that the witness forest
   of an execution of \cola on $G$ has underlying unlabeled forest $f$. 
Obviously \begin{equation}\label{eq:hatbound}\hat{P}_n =  \sum_{f} p_f,\end{equation} where  the sum  above ranges over unlabeled forests as described. 

With each internal node of $f$ with half-degree $k_s$, we associate  weight $$\frac{1}{\gamma}\left(1 -  e^{-\frac{1}{\gamma}} \right)^{2k_s-3}, $$ and we associate weight 1 with the leaves of $f$. We define: \begin{equation}\label{eq:norm} \lVert f \rVert = \prod_{i=1}^n \left(\frac{1}{\gamma}\left(1 -  e^{-\frac{1}{\gamma}} \right)^{2k_i-3}\right).\end{equation} We now prove:
\paragraph{Claim} $ p_f \leq \lVert f \rVert$.

\bigskip\noindent{\em Proof of Claim.} Let ${\mathcal L} = (e_1^1, C_1), \ldots, (e_n^1, C_n)$ be a sequence which if considered as label-sequence of $f$ leads to  an  $n$-witness forest for \cola. We will  examine what the possibilities for ${\mathcal L}$ are and thus bound the probability that at least one such label-sequence, together with $f$,  comprises a witness of \cola. 

First observe that the lengths of the cycles $C_s$ should coincide with the positive degrees $2k_s$ of $f$, therefore the lengths of the $C_s$ of ${\mathcal L}$ are uniquely determined. Then observe that the edge-label $e_1^1$ of ${\mathcal L}$,  which is to be assigned to the first $r$  root of $f$ that is not a leaf,  is uniquely determined to be the $k$'th edge of $G$, where $k-1$ is the number of leaf-roots that precede $r$ (recall the definition of a witness forest, the way it is defined by \cola and the fact that we introduced isolated roots for feasible forests in order  to cover all edges of $G$). 
There are  at most $(\Delta -1)$ choices for $e_1^2$, the second pivotal edge of $C_1$. 
This introduces a factor of $(\Delta -1)$ for the sought after upper bound of $p_f$. Once $e_1^1, e_1^2$ of ${\mathcal L}$ are determined,  the event that $f$, together with such an  ${\mathcal L}$,  gives a witness for \cola  necessitates that  some $C_1$ with pivotal edges $e_1^1, e_1^2$ and length $2k_1$ is bichromatic.  
For each such possibility for $C_1$, the second edge-label $e_2^1$ of ${\mathcal L}$ is determined to be the first edge of $C_1$ that does not correspond to a leaf of $f$ (recall again  the definition of a witness forest, the way it is defined by \cola and the fact that we introduced isolated leaves  for feasible forests in order to cover all edges of each cycle-label of each internal node). Again there are at most $(\Delta-1)$ possibilities for $e_2^2$, thus introducing a factor of $(\Delta-1)$ to the sought after upper bound. 
And again, it becomes necessary that  some $C_2$ with with pivotal edges $e_2^1, e_2^2$ and length $2k_2$ is bichromatic. We continue in this fashion, following the structure of $f$: when a leaf is reached,  we go back to the last internal node (in a depth-first fashion), then the  next edge of this internal node's cycle-label should be the next edge-label in ${\mathcal L}$; when a tree of the forest $f$  is exhausted, we go to the next root,  then the next edge of $G$  should be the next edge-label ${\mathcal L}$. Now from Proposition \ref{prop:proposition}, which bounds the probability that  at least one forest  with labels $(e_s^1, C_s)_s$ is a witness for \cola,  and Corollary \ref{cor:ourapproach}, which bounds the probability that \valc is successful on input $(e_s^1, e_s^2, k_s)_s$, and taking into account the $n$ factors $(\Delta -1)$, which correspond to the possible choices of $e_s^2$, we conclude  that $p_f$ is bounded from above by the product of the weights: 
$$\lVert f \rVert = \prod_{i=1}^n \left(\frac{1}{\gamma}\left(1 -  e^{-\frac{1}{\gamma}} \right)^{2k_i-3}\right).$$
This concludes the proof of the Claim and we now return to the proof of Theorem~\ref{thm:recurence}.

Consider  {\em unlabeled} ordered forests with $m$ ordered trees  with  $n$  internal  nodes in total, whose out-degrees are even and $\geq 6$; assign  weight $w_{2k}$ to all internal nodes of out-degree $2k$; also assign weight 1 to all leaves and let the weight of such a tree be the product of the weight assigned to all its nodes. Then by distributivity of multiplication over addition it follows that the sum $f_n$ of the weights of all such weighted unlabeled ordered forests is given by: 

\begin{equation}\label{eq:colfinalboundlow}
 {f}_n = \sum_{\substack{n_1+\cdots+n_{m} = n \\ n_1, \ldots, n_{m} \geq 0}} t_{n_1} \cdots t_{n_m},
\end{equation}
where $t_n$, the sum of the weights  of  weighted unlabeled ordered rooted {\em trees} with $n$ internal nodes,  is given  by the recurrence:
\begin{equation}
\label{eq:colsumvalidlow}
t_n =  \sum_{k \geq 3}  w_{2k} \Bigg(   \sum_{\substack{n_1+\cdots+n_{2k} = n-1 \\ n_1, \ldots, n_{2k} \geq 0}} t_{n_1} \cdots t_{n_{2k}}\Bigg), 
\end{equation}
with $t_0=1$. By the Claim above we get that: 
\begin{equation}\label{eq:pf}
 \sum_{f} p_f  \leq  \sum_{f}\lVert f \rVert = \sum_{f}\left[ \prod_{i=1}^n \left(\frac{1}{\gamma}\left(1 -  e^{-\frac{1}{\gamma}} \right)^{2k_i-3}\right)\right].  
\end{equation}  
The required  follows from \eqref{eq:hatbound} and \eqref{eq:pf}  making use of the recurrence given by \eqref{eq:colfinalboundlow} and \eqref{eq:colsumvalidlow}.
\end{proof}
\subsection{Asymptotic analysis of the recurrence}
We will asymptotically analyze the coefficients of the OGF $Q(z)$  of $Q_n$. Towards this end, multiply both sides  of equality  \eqref{eq:colsumvalid} 
by $z^n$ and sum for $n= 1, \dots, \infty$ to get 
\begin{equation}
\label{Q}
Q(z) -1 = \sum_{k\geq 3}\left[\frac{1}{\gamma}\left(1-e^{-\frac{1}{\gamma}}\right)^{2k-3}zQ(z)^{2k}  \right], 
\end{equation}
with $Q(0) =1$. Setting $W(z) = Q(z)-1$ we get 
\begin{equation}
\label{W}
W(z) = \sum_{k\geq 3}\left[ \frac{1}{\gamma}\left(1-e^{-\frac{1}{\gamma}}\right)^{2k-3}z(W(z)+1)^{2k}  \right], 
\end{equation}
with $W(0) =0$. For notational convenience, set  $W = W(z)$. Then from \eqref{W} we get:
\begin{equation}
\label{W2}
W = z\frac{1}{\gamma}\cdot \frac{\left(1-e^{-\frac{1}{\gamma}}\right)^3(W+1)^6} {1-\left(1-e^{-\frac{1}{\gamma}}\right)^2(W+1)^2}.
\end{equation}
Set now 
\begin{equation}
\label{defphi}
\phi(x) =\frac{1}{\gamma}\cdot \frac{\left(1-e^{-\frac{1}{\gamma}}\right)^3(x+1)^6}{1-\left(1-e^{-\frac{1}{\gamma}}\right)^2(x+1)^2},
\end{equation}
to get from \eqref{W2}:
\begin{equation}
\label{phi}
W = z\phi(W).
\end{equation}
Let now $R= \frac{1}{\left(1-e^{-\frac{1}{\gamma}} \right)}-1$  be the radius of convergence of the series representing  $\phi$ at $0$.   Let also $\tau$ be the (necessarily unique) solution  in the interval $(0,R)$ of the characteristic equation  (in $\tau$):
\begin{equation}
\label{FSequation}
\phi(\tau) - \tau\phi'(\tau) =0.
\end{equation}
Finally, let \begin{equation}\label{rho}\rho =  \frac{\tau}{\phi(\tau)}.\end{equation}
By \cite[Proposition IV.5]{Flajolet:2009:AC:1506267} (it is trivial to check that the hypotheses in that Theorem are satisfied for $\gamma >0$),
we get $[z^n]Q \bowtie {(1/\rho)}^n$, i.e. $\limsup \ ([z^n]Q)^{1/n} = 1/\rho$ 
 (see \cite[IV.3.2]{Flajolet:2009:AC:1506267}).

Now by a simple search (through Maple, for the code see \cite{GKPT2}) we found that for $\gamma=1.73095$, the unique positive solution of \eqref{FSequation} in the radius of convergence is $\tau=0.1747094762$, and this value of $\tau$ gives  $1/\rho=0.9999789027<1$. Therefore by making use of  \eqref{eq:colfinalbound}, we get:

\begin{theorem}
\label{colthm}
Assuming  $\Delta$, the maximum degree of the graph $G$, is  constant, and given the availability of at least   $3.74(\Delta-1)+1$  colors,  there exists an integer $N$, which depends  linearly on $m$, the number of edges of $G$, and a constant $\rho >1 $ such that if $n/\log n \geq N$
then  the probability that \emph{\cola} executes at least  $n$ calls of \emph{\reco} is $< (1/\rho)^n$; therefore the graph has  an acyclic edge coloring. 
\end{theorem}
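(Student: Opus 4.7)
The plan is to mirror, step by step, the chain of reasoning used in Section~\ref{generalLLL} to go from the recurrence for $Q_n$ to Theorem~\ref{epD}. All the pieces are already in place: the recurrence \eqref{colsumvalid} has been translated into the functional equation \eqref{phi} via the substitution $W=Q-1$; the value $\gamma=1.73095$ has been shown to make the characteristic equation $\phi(\tau)-\tau\phi'(\tau)=0$ have a solution $\tau\in(0,R)$ for which $\rho=\phi(\tau)/\tau<1$; and $\lceil(2+\gamma)(\Delta-1)\rceil+1\le\lceil3.74(\Delta-1)\rceil+1$ for this $\gamma$. What remains is to assemble these ingredients into a tail bound on the number of calls of \reco.

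First I would record the bridge between the algorithm and \valc: exactly as in Lemma~\ref{lem:valid}, the witness forest produced by $n$ consecutive steps of \cola is validated by \valc on the same input, so $\hat P_n\le P_n$, and, as in equality \eqref{eq:hat}, $\Pr[\text{\cola takes at least }n\text{ steps}]=\hat P_n$. The argument is formally identical: inspecting the code of \cola, the sequence of colorings at the beginning of the successive recursive calls of \reco associated with a witness forest $F$ is exactly a sequence of ``colorings then recolorings'' that makes \valc succeed on $F$.

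Next I would convert the Flajolet--Sedgewick asymptotic $[z^n]W\bowtie\rho^n$ into a usable inequality: for some constant $A_0=A_0(\Delta,\gamma)>1$ one has $Q_n\le A_0\rho^n$ for all $n\ge 0$ (absorbing any polynomial prefactor and initial values into $A_0$). Plugging this into \eqref{colfinalbound} and using the standard bound $\#\{(n_1,\dots,n_m):n_i\ge 0,\sum n_i=n\}=\binom{n+m-1}{m-1}\le (n+1)^m$ yields
\begin{equation*}
P_n\;\le\;\sum_{n_1+\cdots+n_m=n}\!\!A_0^{\,m}\rho^{n_1+\cdots+n_m}\;\le\;(A\,n)^m\rho^n
\end{equation*}
for a suitable constant $A>1$, exactly in the form of Lemma~\ref{lem:algorithm}.

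Finally, to force $(An)^m\rho^n$ to decay exponentially, one needs
\begin{equation*}
m\log(An)+n\log\rho\;<\;0,
\end{equation*}
which holds as soon as $n/\log n$ exceeds a threshold $N$ that is linear in $m$ (since $\log\rho<0$ is a fixed constant depending only on $\Delta$ and $\gamma$). Pick any $\rho'\in(\rho,1)$; then for all such $n$ one has $P_n<\rho'^{\,n}$, hence $\hat P_n<\rho'^{\,n}$, which is the desired conclusion with $\rho'$ in place of the $\rho$ in the theorem statement. The main obstacle I anticipate is not conceptual but notational: making sure the asymptotic equivalence $[z^n]W\bowtie\rho^n$, which is a $\limsup$ statement, is converted into a clean uniform bound $Q_n\le A_0\rho^n$ valid for all $n$, and doing so without relying on anything beyond the hypotheses of \cite[Proposition IV.5]{Flajolet:2009:AC:1506267} already invoked to produce $\rho$.
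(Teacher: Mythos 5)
Your proposal is correct and follows essentially the same route as the paper, which at this point simply says ``by making use of \eqref{colfinalbound} and working as in the previous section'': you reassemble the analogues of Lemma~\ref{lem:valid}, Lemma~\ref{lem:phase} and Lemma~\ref{lem:algorithm} for the coloring setting and derive the cutoff condition $m\log(An)+n\log\rho<0$. Your explicit handling of the $\limsup$ nature of $[z^n]W\bowtie\rho^n$ by passing to a slightly larger $\rho'\in(\rho,1)$ is, if anything, more careful than the paper's own treatment.
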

Now if the graph has girth $2r-1$ for $r\geq 4$, the previous arguments carry over with minimal changes. Namely,
equation \eqref{eq:colsumvalid} becomes: 
\begin{equation}
\label{gcolsumvalid}
Q_n =  \sum_{k \geq r} \Bigg[\frac{1}{\gamma}\left(1-e^{-\frac{1}{\gamma}}\right)^{2k-3}\cdot  \sum_{\substack{n_1+\cdots+n_{2k} = n-1 \\ n_1, \ldots, n_{2k} \geq 0}} Q_{n_1} \cdots Q_{n_{2k}}\Bigg], \quad Q_0 =1. 
\end{equation} 
Also in \eqref{Q} and \eqref{W}, the starting point of the summation is changed from $3$ to $r$. Moreover,   equation \eqref{W2}  becomes:
\begin{equation}
\label{gW2}
W = z\frac{1}{\gamma}\cdot \frac{\left(1-e^{-\frac{1}{\gamma}}\right)^{2r-3}(W+1)^{2r}}{1-\left(1-e^{-\frac{1}{\gamma}}\right)^2(W+1)^2},
\end{equation}
and equation \eqref{defphi}  becomes:
\begin{equation}
\label{gdefphi}
\phi(x) = \frac{1}{\gamma}\cdot \frac{\left(1-e^{-\frac{1}{\gamma}}\right)^{2r-3}(x+1)^{2r}}{1-\left(1-e^{-\frac{1}{\gamma}}\right)^2(x+1)^2}. 
\end{equation}
Working as before, we get numerical results depicted in Figure~\ref{fig:girth} with sample specific values explicitly given in Figure~\ref{fig:results}.
\begin{figure}[t]
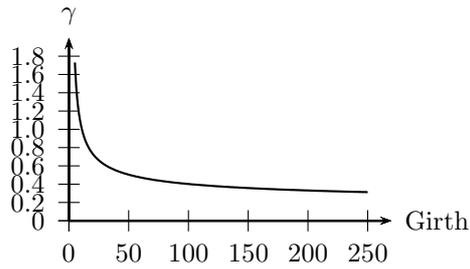

\centering
\savedata{\mydata}[
  {{5, 1.731000000}, {6, 1.488000000}, {7, 1.326000000}, {8, 1.210000000}, {9, 1.121000000}, {10, 1.051000000}, {11, .9930000000}, {12, .9460000000}, {13, .9050000000}, {14, .8700000000}, {15, .8400000000}, {16, .8130000000}, {17, .7890000000}, {18, .7670000000}, {19, .7480000000}, {20, .7300000000}, {21, .7140000000}, {22, .6990000000}, {23, .6860000000}, {24, .6730000000}, {25, .6610000000}, {26, .6500000000}, {27, .6400000000}, {28, .6300000000}, {29, .6210000000}, {30, .6120000000}, {31, .6040000000}, {32, .5970000000}, {33, .5890000000}, {34, .5820000000}, {35, .5760000000}, {36, .5700000000}, {37, .5640000000}, {38, .5580000000}, {39, .5520000000}, {40, .5470000000}, {41, .5420000000}, {42, .5370000000}, {43, .5320000000}, {44, .5280000000}, {45, .5240000000}, {46, .5190000000}, {47, .5150000000}, {48, .5120000000}, {49, .5080000000}, {50, .5040000000}, {51, .5010000000}, {52, .4970000000}, {53, .4940000000}, {54, .4910000000}, {55, .4880000000}, {56, .4840000000}, {57, .4820000000}, {58, .4790000000}, {59, .4760000000}, {60, .4730000000}, {61, .4710000000}, {62, .4680000000}, {63, .4650000000}, {64, .4630000000}, {65, .4610000000}, {66, .4580000000}, {67, .4560000000}, {68, .4540000000}, {69, .4520000000}, {70, .4490000000}, {71, .4470000000}, {72, .4450000000}, {73, .4430000000}, {74, .4410000000}, {75, .4400000000}, {76, .4380000000}, {77, .4360000000}, {78, .4340000000}, {79, .4320000000}, {80, .4310000000}, {81, .4290000000}, {82, .4270000000}, {83, .4260000000}, {84, .4240000000}, {85, .4220000000}, {86, .4210000000}, {87, .4190000000}, {88, .4180000000}, {89, .4160000000}, {90, .4150000000}, {91, .4140000000}, {92, .4120000000}, {93, .4110000000}, {94, .4090000000}, {95, .4080000000}, {96, .4070000000}, {97, .4060000000}, {98, .4040000000}, {99, .4030000000}, {100, .4020000000}, {101, .4010000000}, {102, .3990000000}, {103, .3980000000}, {104, .3970000000}, {105, .3960000000}, {106, .3950000000}, {107, .3940000000}, {108, .3930000000}, {109, .3920000000}, {110, .3910000000}, {111, .3900000000}, {112, .3890000000}, {113, .3880000000}, {114, .3870000000}, {115, .3860000000}, {116, .3850000000}, {117, .3840000000}, {118, .3830000000}, {119, .3820000000}, {120, .3810000000}, {121, .3800000000}, {122, .3790000000}, {123, .3780000000}, {124, .3770000000}, {125, .3760000000}, {126, .3760000000}, {127, .3750000000}, {128, .3740000000}, {129, .3730000000}, {130, .3720000000}, {131, .3710000000}, {132, .3710000000}, {133, .3700000000}, {134, .3690000000}, {135, .3680000000}, {136, .3670000000}, {137, .3670000000}, {138, .3660000000}, {139, .3650000000}, {140, .3650000000}, {141, .3640000000}, {142, .3630000000}, {143, .3620000000}, {144, .3620000000}, {145, .3610000000}, {146, .3600000000}, {147, .3600000000}, {148, .3590000000}, {149, .3580000000}, {150, .3580000000}, {151, .3570000000}, {152, .3560000000}, {153, .3560000000}, {154, .3550000000}, {155, .3540000000}, {156, .3540000000}, {157, .3530000000}, {158, .3530000000}, {159, .3520000000}, {160, .3510000000}, {161, .3510000000}, {162, .3500000000}, {163, .3500000000}, {164, .3490000000}, {165, .3480000000}, {166, .3480000000}, {167, .3470000000}, {168, .3470000000}, {169, .3460000000}, {170, .3460000000}, {171, .3450000000}, {172, .3450000000}, {173, .3440000000}, {174, .3430000000}, {175, .3430000000}, {176, .3420000000}, {177, .3420000000}, {178, .3410000000}, {179, .3410000000}, {180, .3400000000}, {181, .3400000000}, {182, .3390000000}, {183, .3390000000}, {184, .3380000000}, {185, .3380000000}, {186, .3370000000}, {187, .3370000000}, {188, .3370000000}, {189, .3360000000}, {190, .3360000000}, {191, .3350000000}, {192, .3350000000}, {193, .3340000000}, {194, .3340000000}, {195, .3330000000}, {196, .3330000000}, {197, .3320000000}, {198, .3320000000}, {199, .3320000000}, {200, .3310000000}, {201, .3310000000}, {202, .3300000000}, {203, .3300000000}, {204, .3290000000}, {205, .3290000000}, {206, .3290000000}, {207, .3280000000}, {208, .3280000000}, {209, .3270000000}, {210, .3270000000}, {211, .3270000000}, {212, .3260000000}, {213, .3260000000}, {214, .3250000000}, {215, .3250000000}, {216, .3250000000}, {217, .3240000000}, {218, .3240000000}, {219, .3230000000}, {220, .3230000000}, {221, .3230000000}, {222, .3220000000}, {223, .3220000000}, {224, .3220000000}, {225, .3210000000}, {226, .3210000000}, {227, .3210000000}, {228, .3200000000}, {229, .3200000000}, {230, .3190000000}, {231, .3190000000}, {232, .3190000000}, {233, .3180000000}, {234, .3180000000}, {235, .3180000000}, {236, .3170000000}, {237, .3170000000}, {238, .3170000000}, {239, .3160000000}, {240, .3160000000}, {241, .3160000000}, {242, .3150000000}, {243, .3150000000}, {244, .3150000000}, {245, .3140000000}, {246, .3140000000}, {247, .3140000000}, {248, .3140000000}, {249, .3130000000}, {250, .3130000000}}]
\psset{xAxisLabel={Girth},yAxisLabel={$\gamma$}}
\begin{psgraph}[Dx=50,xlabelOffset=0.0,Dy=0.2]{->}(0,0)(270,2){0.35\textwidth}{0.2\textwidth}
\listplot[plotstyle=curve,dotstyle=o,fillcolor=red]{\mydata}
\end{psgraph}
\bigskip
\caption{$\gamma$ as a function of girth}
\label{fig:girth}
\end{figure}

\section{Discussion} \label{sec: discussion} There are several conceivable possibilities for improvement of the $\lceil 3.74(\Delta-1)\rceil+1 $ bound. For example, in \cola we can recolor not all edges of $C_s$ but all but two consecutive of them (because it is only those that determine bichromaticity). Also in Theorem  \ref{thm:recurence}, we can conceivably improve the bound provided by claiming that  for each execution, there is only one $k_s$ so that the cycle $C_s$ of half-length $k_s$ is bichromatic. Nevertheless we opted not to consider these possible improvements, as our aim was only to present a  probabilistic analysis of a Moser-type algorithm with dependent variables. 

\section*{Acknowledgements}
We are grateful to Juanjo Ru\'{e} for showing to us how to deal with the asymptotics of the coefficients of inverse generating functions. We are grateful to  Dieter Mitsche on one hand and to  Dimitris Achlioptas and  Fotis Iliopoulos, on the other, for pointing out   errors in previous versions of this paper;  the second and fourth authors are also  indebted to the latter two   for   initiating them to this line of research. The contributions of the undergraduate students   Rafail Kartsioukas, Georgios Kontogeorgiou, Stavros Messaris and  Zoi Terzopoulou at  various stages of this work were crucial. We sincerely thank them all. We are also much indebted  to Gwena\"{e}l Joret for pointing out the incompleteness of the proof of \cite[Lemma 5]{giotis2017acyclic} and for the illuminating for us exchange of messages that followed. John Livieratos' remarks at the last phases of this work were very helpful.

\section*{References}


\end{document}